\documentclass[conference]{IEEEtran}





\usepackage[dvips]{color}
\usepackage{comment}
\usepackage{todonotes}
\usepackage{epsf}
\usepackage{epsfig}
\usepackage{times}
\usepackage{epsfig}
\usepackage{graphicx}
\usepackage{bbold}
\usepackage{mathtools}
\usepackage{mathrsfs}
\usepackage{amssymb}
\usepackage{pdfpages}
\usepackage{epstopdf}
\usepackage{dsfont}
\usepackage{lettrine} 
\usepackage{amsmath,epsfig,amssymb,algorithm,algpseudocode,amsthm,cite,url}
\usepackage{subcaption}
\allowdisplaybreaks
\usepackage{csquotes}
\topmargin=-0.75in
\usepackage{verbatim}
\usepackage[english]{babel}
\IEEEoverridecommandlockouts
\usepackage{amsmath,amssymb}

\captionsetup{%
	figurename=Fig.,
}
\usepackage[justification=centering]{caption}
\usepackage{verbatim}

\newtheorem{theorem}{\bf Theorem}

\begin{document}
	
\title{ Mobile Internet of Things: Can UAVs Provide an Energy-Efficient Mobile Architecture?}    
\author{\IEEEauthorblockN{Mohammad Mozaffari$^1$, Walid Saad$^1$, Mehdi Bennis$^2$, and M\'erouane Debbah$^3$}\vspace{-0.3cm}\\
	\IEEEauthorblockA{
		\small $^1$ Wireless@VT, Electrical and Computer Engineering Department, Virginia Tech, VA, USA, Emails: \url{{mmozaff , walids}@vt.edu} \\
		$^2$ CWC - Centre for Wireless Communications, Oulu, Finland, Email: \url{bennis@ee.oulu.fi}\\
		$^3$ Mathematical and Algorithmic Sciences Lab, Huawei France R \& D, Paris, France, and CentraleSup´elec,\\   Universit´e Paris-Saclay, Gif-sur-Yvette, France, Email: \url{merouane.debbah@huawei.com}
	}\vspace{-0.9cm}}
\maketitle\vspace{-0.7cm}
\begin{abstract}
In this paper, the optimal trajectory and deployment of multiple unmanned aerial vehicles (UAVs), used as aerial base stations to collect data from ground Internet of Things (IoT) devices, is investigated. In particular, to enable reliable uplink communications for IoT devices with a minimum energy consumption, a new approach for optimal mobility of the UAVs is proposed. First, given a fixed ground IoT network, the total transmit power of the devices is minimized by properly clustering the IoT devices with each cluster being served by one UAV. Next, to maintain energy-efficient communications in time-varying mobile IoT networks, the optimal trajectories of the UAVs are determined by exploiting the framework of optimal transport theory. Simulation results show that by using the proposed approach, 
 the total transmit power of IoT  devices for reliable uplink communications can be reduced by $56\%$ compared to the fixed Voronoi deployment method. Moreover, our results yield the optimal paths that will be used by UAVs to serve the mobile IoT devices with a minimum energy consumption.
 
\end{abstract} \vspace{-0.01cm}

\section{Introduction}\vspace{-0.1cm}

The use of unmanned aerial vehicles (UAVs) as flying wireless communication platforms has received significant attention recently  \cite{mozaffari2, HouraniModeling,Irem}. 
 On the one hand, UAVs can be used as wireless relays for improving connectivity of ground wireless devices and extending network coverage. On the other hand, UAVs can act as mobile aerial base stations to provide reliable downlink and uplink communications for ground users, and boost the capacity of wireless networks \cite{Irem} and \cite{Letter}. Compared to the terrestrial base stations, the advantage of using UAV-based aerial base stations is their ability to quickly and easily move. Furthermore, the high altitude of UAVs can enable line-of-sight (LoS) communication links to the ground users. 
  Due to the adjustable altitude and mobility, UAVs can move towards potential ground users and establish reliable connections with a low transmit power. Hence, they can provide a cost-effective and energy-efficient solution to collect data from ground mobile users that are spread around a geographical area with limited terrestrial infrastructure. 
 
 Indeed, UAVs can play a key role in the \emph{Internet of Things (IoT)} which is composed of small, battery-limited devices such as  
  sensors, and health monitors \cite{dawy,lien,Eragh}. These devices are typically unable to transmit over a long distance due to their energy constraints \cite{lien}. In such IoT scenarios, UAVs can dynamically move towards IoT devices, collect the IoT data, and transmit it to other devices which are out of the communication ranges of the transmitting IoT devices \cite{lien}. In this case, the UAVs play the role of moving aggregators for IoT networks. However, to effectively use UAVs for IoT communications, several challenges must be addressed such as optimal deployment and energy-efficient use of UAVs \cite{mozaffari2}.
 
 In \cite {mozaffari2}, we investigated the optimal deployment and movement of a single UAV for supporting downlink wireless communications. However, this work was restricted to a single UAV and focused on the downlink. The work in \cite{Han} analyzed the optimal trajectory of UAVs to enhance connectivity of ad-hoc networks. Nevertheless, this work did not study the optimal movement of multiple UAVs acting as aerial base stations. The work in \cite{ MozaffariTransport} studied the optimal deployment of UAVs and UAV-users association for static ground users with the goal of meeting users' rate requirements. In \cite{pang}, the authors used UAVs to efficiently collect data and recharge the clusters' head in a wireless sensor network which is partitioned into multiple clusters. However, this work is limited to a static sensor network, and does not investigate the optimal deployment of the UAVs. While the energy efficiency of uplink data transmission in a machine-to-machine (M2M) communication network was investigated in \cite{Nof} and \cite{Tu}, the presence of UAVs was not considered. 
  In fact, none of the prior studies \cite{mozaffari2, HouraniModeling,Irem}, and  \cite{Han, MozaffariTransport, Nof, pang, Tu}, addressed the problem of optimal deployment and mobility of UAVs for enabling reliable and  energy-efficient communications for mobile IoT devices.  


 
The main contribution of this paper is to propose a novel approach for deploying multiple, mobile UAVs for energy-efficient uplink data collection from mobile, ground IoT devices. First, to minimize the total transmit power of the IoT devices, we create multiple clusters where each one is served by one of the UAVs. Next, to guarantee energy-efficient communications for the IoT devices in mobile and time-varying networks, we determine the optimal paths of the UAVs by exploiting dynamic clustering and optimal transport theory \cite{villani2003}. Using the proposed method, the total transmit power of the IoT devices required for successful transmissions is minimized by the dynamic movement of the UAVs. In addition, the proposed approach will minimize the total energy needed for the UAVs to effectively move. The results show that, using our proposed framework, 
  the total transmit power of the devices during the uplink transmissions can be reduced by $56\%$ compared to the fixed Voronoi deployment method.  Furthermore, given the optimal trajectories for UAVs, they can serve the mobile IoT devices with a minimum energy consumption.
    
The rest of this paper is organized as follows. In Section II, we present the system model and problem formulation. Section III presents the optimal devices' clustering approach. In Section IV, we address the mobility of the UAVs using discrete transport theory. In Section V, we provide the simulation results, and Section VI draws some conclusions.\vspace{-0.05cm}

\section{System Model and Problem Formulation}

Consider an IoT system consisting of a set $\mathcal{L}=\{1,2,...,L\}$ of $L$ IoT devices deployed within a geographical area. In this system, a set $\mathcal{K}=\{1,2,...,K\}$ of $K$ UAVs must be deployed to collect the data from the ground devices in the uplink. The locations of device $i\in \mathcal{L}$ and UAV $j\in \mathcal{K}$ are, respectively, given by $(x_i,y_i)$ and $(x_{u,j},y_{u,j},h_j)$ as shown in Figure 1. We assume that devices transmit in the uplink using orthogonal frequency division multiple access (OFDMA) and UAV $j$ can support at most $M_j$ devices simultaneously. Note that, we consider a centralized network in which the locations of devices and UAVs are  known to a control center such as a central cloud server. The ground IoT devices can be mobile (e.g. smart cars) and their data availability can be intermittent (e.g. sensors). Therefore, to effectively respond to the network mobility, it is essential that  the UAVs optimally move for  establishing reliable and energy-efficient communications with the devices. Note that, in our analysis, without loss of generality any mobility model can be accommodated.

 For  ground-to-air communications, each device will typically have a LoS view towards a specific UAV with a given probability. This LoS probability depends on the environment, location of the device and the UAV, and the elevation angle between the device and the UAV \cite{HouraniModeling}. One suitable expression for the LoS probability is given by \cite{HouraniModeling}:\vspace{-0.12cm}
\begin{equation}\label{PLoS}
	{P_{{\rm{LoS}}}} = \frac{1}{{1 + \psi \exp ( - \beta\left[ {\theta  - \psi} \right])}},
\end{equation}
where $\psi$  and $\beta$  are constant values which depend on the carrier frequency and type of environment such as rural, urban, or dense urban, and $\theta$  is the elevation angle.  Clearly, ${\theta} = \frac{{180}}{\pi } \times {\sin ^{ - 1}}\left( {{\textstyle{{{h_j}} \over { {d_{ij}}}}}} \right)$, where $ {d_{ij}} = \sqrt {(x_i-x_{u,j})^2+(y_i-y_{u,j})^2+h_j^2 }$ is the distance between device $i$ and UAV $j$. 

 From (\ref{PLoS}), we can observe that by increasing the elevation angle or increasing the UAV altitude, the LoS probability increases. We assume that, the necessary condition for connecting a device to a UAV is to have a LoS probability greater than a threshold ($\epsilon$ close to 1). In other words, ${P_{\text{LoS}}}(\theta ) \ge \varepsilon$, and hence, $\theta \ge P_{\text{LoS}}^{ - 1}(\varepsilon )$ leading to:\vspace{-0.2cm}  
 \begin{equation}\label{dmin}
 d_{ij} \le \frac{h_j}{{\sin \left( {P_{\text{LoS}}^{ - 1}(\varepsilon )} \right)}}. 
\end{equation}
Note that (\ref{dmin}) shows the necessary condition for connecting the device to the UAV. Therefore, a device will not be assigned to UAVs which are located at distances greater than $\frac{h_j}{{\sin \left( {P_{\text{LoS}}^{ - 1}(\varepsilon )} \right)}}$.  

Now, considering the LoS link, the received signal power at UAV $j$ from device $i$ is given by \cite{HouraniModeling} (in dB):
 \begin{equation} \label{Pr}
P_r^{ij} = {P_{t,i}} - 10\alpha \log \left( {\frac{{4\pi {f_c}}}{c}{d_{ij}}} \right) - \eta,
\end{equation} 
where $P_{t,i}$ is the transmit power of device $i$ in dB, $f_c$ is the carrier frequency, $\alpha=2$ is the path loss exponent for LoS propagation, $\eta$ is an excessive path loss added to the free space propagation loss, and $c$ is the speed of light. 

In our model, the transmit power of the devices must satisfy the minimum signal-to-noise-ratio (SNR) required for a successful decoding at UAVs. For quadrature phase shift keying (QPSK) modulation, the minimum transmit power of device $i$ needed to reach a bit error rate requirement of $\delta$ is:\vspace{-0.2cm}
 \begin{equation} \label{Pt_min}
P_t^{ij} = {\left[ {{Q^{ - 1}}\left( \delta  \right)} \right]^2}\frac{{{R_b}{N_o}}}{2}{10^{\eta /10}}{\left( {\frac{{4\pi {f_c}{d_{ij}}}}{c}} \right)^2},
\end{equation}
where $Q^{ - 1}(.)$ is the inverse $Q$-function, $N_o$ is the noise power spectral density, and $R_b$ is the transmission bit rate. Note that, to derive (\ref{Pt_min}) using (\ref{Pr}),  we use the bit error rate expression for QPSK modulation as ${P_e} = Q(\sqrt{\frac{{2P_r^{ij}}}{{{R_b}{N_o}}}})$ \cite{Goldsmith}.

Our first goal is to optimally move and deploy the UAVs in a way that the total transmit power of devices to reach the minimum SNR requirement for a successful decoding at the UAVs is minimized. In fact, the objective function is:\vspace{-0.2cm}
\begin{align} \label{opt1}
&\min \limits_{{\mathcal{C}_j},{\boldsymbol{\mu_j}}} \sum\limits_{j = 1}^K {\sum\limits_{i \in {\mathcal{C}_j}} {P_t^{ij}} }, \,\,\,j\in \mathcal{K},
\end{align}
where $P_t^{ij}$ is the transmit power of device $i$ to UAV $j$, and $K$ is the number of UAVs. Also, $\mathcal{C}_j$ is the set of devices assigned to UAV $j$, and $\boldsymbol{\mu_j}$ is the 3D location of UAV $j$.

From (\ref{Pt_min}), we can observe that the transmit power is directly proportional to the distance squared. Hence, minimizing the power is equivalent to minimizing the distance squared. Then, using (\ref{dmin}), (\ref{Pt_min}), and (\ref{opt1}), and considering the constraint on the maximum number of devices that can connect to each UAV, our optimization problem can be formulated as:\vspace{-0.2cm}
\begin{align} 
&\left\{ {\mathcal{C}_j^*,\boldsymbol{\mu_j^*}} \right\} = \mathop {\arg \min }\limits_{{\mathcal{C}_j},{\boldsymbol{\mu_j}}} \sum\limits_{j = 1}^K {\sum\limits_{i \in {\mathcal{C}_j}} {d_{ij}^2} } \,\,,\,\,j\in \mathcal{K}, \label{opt_main}\\
{\text{s.t.}}\,\,&{\mathcal{C}_j} \cap {\mathcal{C}_m} = \emptyset ,\,\,j \ne m,\,\,\,\, j,m\in \mathcal{K}, \label{cons1} \\
&\sum\limits_{j = 1}^K {|{\mathcal{C}_j}|}  = L, \label{cons2}\\\
&{d_{ij}} \le \frac{h_j}{{\sin \left( {P_{\text{LoS}}^{ - 1}(\varepsilon )} \right)}}, \\
&|{\mathcal{C}_j}| \le {M_j},
\end{align}
where $|\mathcal{C}_j|$ is the number of devices assigned to UAV $j$, $L$ is the total number of devices, and $M_j$ is the maximum number of devices that UAV $j$ can support. (\ref{cons1}) and (\ref{cons2}) guarantee that each device connects to only one UAV.

Clearly, we can consider the set of devices which are assigned to a UAV as a cluster, and place the corresponding UAV at the center of the cluster. Placing a UAV at the cluster center ensures that the UAV has a minimum total distance squared to all the cluster members. Hence, to solve problem (\ref{opt_main}), we need to find $L$ clusters, and their centers which effectively correspond to the locations of the UAVs. Note that, in a time varying network in which the location of IoT devices change, the clusters will also change. Consequently, the location of UAVs as the center of the clusters must be updated accordingly. However, moving these UAVs to the center of the clusters should be done with a minimum energy consumption. Therefore, in the mobile scenario, while finding the optimal paths of UAVs, we need to determine which UAV must go to which cluster center. Next, we present the IoT devices' clustering approach for minimizing the total transmit of the IoT devices.


\begin{figure}[!t]
	\begin{center}
		\vspace{-0.2cm}
		\includegraphics[width=8.45cm]{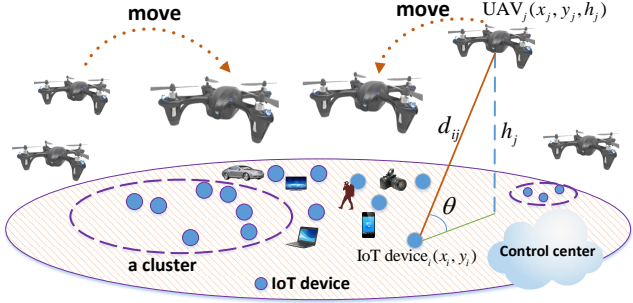}
		\vspace{-0.4cm}
		\caption{ \small System model.\vspace{-0.72cm}}
		\label{Nu}
\end{center}
\end{figure}

\section{Clustering IoT Devices}

Our first step is to optimally cluster the devices and deploy the UAVs at the center of the formed clusters so as to minimize the transmit power of the ground IoT devices. 

We solve the problem in (\ref{opt_main}) by exploiting the constrained $K$-mean clustering approach \cite{Hoppner}. In the $K$-mean clustering problem, given $L$ points in $\mathds{R}^n$, the goal is to partition the points into $K$ disjoint clusters such that the sum of distances squared of the points to their corresponding cluster center is minimized. Hence, considering (\ref{opt1}) and (\ref{opt_main}), the total transmit power of devices is minimized by placing the UAVs in the center of the optimal clusters. This problem can be solved in two steps using an iterative process. The first step is related to the assignment, and the second step is called update.

\subsection{Assignment Step}

In the assignment step, given the location of the clusters' center (given all $ \boldsymbol{\mu_j}$), we find the optimal clusters for which the total distance squared between the cluster members and their center is minimized. Therefore, in our problem, given the location of the UAVs, we will first determine the optimal assignment of the devices to UAVs which can be written as:\vspace{-0.1cm}
\begin{align} \label{assign}
&\min\limits_{A_{ij}} \sum\limits_{j = 1}^K {\sum\limits_{i = 1}^L {{A_{ij}}||{\boldsymbol{v_i}} - {\mu _j}|{|^2}} }, \\
{\rm{s}}{\rm{.t}}{\rm{.}}&\sum\limits_{i = 1}^L {{A_{ij}}}  \le {M_j}, \,\,\,j\in\mathcal{K},\\
&\sum\limits_{j = 1}^K {{A_{ij}}}  = 1, \,\,\,i\in\mathcal{L},\\
&{A_{ij}}||{\boldsymbol{v_i}} - {\boldsymbol{\mu_j}}|| \le \frac{h_j}{{\sin \left( {P_{\text{LoS}}^{ - 1}(\varepsilon )} \right)}},\\
&{A_{ij}} \in \{ 0,1\}, 
\end{align}
where $\boldsymbol{v_i}=(x_i,y_i)$ is the two-dimensional location vector of device $i$, and $A_{ij}$ is equal to 1 if device $i$ is assigned to UAV $j$, otherwise $A_{ij}$ will be equal to 0. The problem presented in (\ref{assign}) is an integer linear programming which is solved by using the cutting plane method \cite{garf}. 

\subsection {Update Step}

In the update step, given the clusters obtained in the assignment step, we update the location of the UAVs which is equivalent to updating the center of the clusters. Thus, the update location of UAVs is the solution to the following optimization problem:
\begin{align}\label{update}
&\mathop {\min }\limits_{({x_{u,j}},{y_{u,j}},h_j)} \sum\limits_{i \in {\mathcal{C}_j}} {{{({x_{u,j}} - {x_i})}^2} + {{({y_{u,j}} - {y_i})}^2}}+{h_j}^2 ,\\
&\text{s.t.}\,\,{({x_{u,j}} - {x_i})^2} + {({y_{u,j}} - {y_i})^2} + h_j^2\left( {1 - \frac{1}{{{{\sin }^2}\left( {P_{\text{LoS}}^{ - 1}(\varepsilon )} \right)}}} \right) \le 0, \label{consupdate} \nonumber\\
& \text{for all} \,\,\, i \in {\mathcal{C}_j},\,\, \text{and}\,\, j\in\mathcal{K}.
\end{align} 

\begin{theorem}
	\normalfont
The solution to (\ref{update}) is $\boldsymbol{{s^*}} = (x_{u,j}^*,y_{u,j}^*,h_j^*) =   - \boldsymbol{P{(\lambda )^{ - 1}}Q(\lambda )}$, with the vector $\boldsymbol{\lambda}$ that maximizes the following concave function:
\begin{align}
    &\mathop {\max {\rm{ }}}\limits_{\boldsymbol{\lambda}}  \frac{1}{2} \boldsymbol{Q{(\lambda )^T} P{(\lambda )^{ - 1}}Q(\lambda )} +   r(\boldsymbol{\lambda}),\\
	&\textnormal {s.t.} \,\,\,\boldsymbol{\lambda}  \ge 0,
\end{align}
where $\boldsymbol{P(\lambda)}= \boldsymbol{P_o} + \sum\limits_i {{\lambda _i}{P_i}}$, $\boldsymbol{Q(\lambda)}=\boldsymbol{Q_o} + \sum\limits_i {{\lambda _i}{Q_i}}$ and $r(\boldsymbol{\lambda})={r_o} + \sum\limits_i {{\lambda _i}{r_i}}$, with $\boldsymbol{P_o}$, $\boldsymbol{Q_o}$, $r_o$, $\boldsymbol{P_i}$, $\boldsymbol{Q_i}$, and $r_i$ given in the proof.	 	
\end{theorem}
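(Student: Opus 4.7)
The plan is to recast (\ref{update}) as a quadratically constrained quadratic program in the stacked variable $\boldsymbol{s}=(x_{u,j},y_{u,j},h_j)^T$ and solve it via Lagrangian duality. First I would expand the squared norms to put the objective in canonical form $\boldsymbol{s}^T\boldsymbol{P_o}\boldsymbol{s}+2\boldsymbol{Q_o}^T\boldsymbol{s}+r_o$, reading off $\boldsymbol{P_o}=|\mathcal{C}_j|\,\boldsymbol{I_3}$, $\boldsymbol{Q_o}=-(\sum_{i\in\mathcal{C}_j} x_i,\,\sum_{i\in\mathcal{C}_j} y_i,\,0)^T$, and $r_o=\sum_{i\in\mathcal{C}_j}(x_i^2+y_i^2)$. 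Each angular constraint in (\ref{consupdate}) is then placed in the same shape $\boldsymbol{s}^T\boldsymbol{P_i}\boldsymbol{s}+2\boldsymbol{Q_i}^T\boldsymbol{s}+r_i\le 0$ with $\boldsymbol{P_i}=\mathrm{diag}(1,1,1-1/\sin^2(P_{\text{LoS}}^{-1}(\varepsilon)))$, $\boldsymbol{Q_i}=-(x_i,y_i,0)^T$, and $r_i=x_i^2+y_i^2$; these are the objects referenced in the theorem statement.

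Next I would assemble the Lagrangian $L(\boldsymbol{s},\boldsymbol{\lambda})=\boldsymbol{s}^T\boldsymbol{P(\lambda)}\boldsymbol{s}+2\boldsymbol{Q(\lambda)}^T\boldsymbol{s}+r(\boldsymbol{\lambda})$ using the aggregates named in the theorem. Because $\boldsymbol{P_o}$ is strictly positive definite, $\boldsymbol{P(\lambda)}$ is invertible for every $\boldsymbol{\lambda}\ge 0$, so the first-order condition $\nabla_{\boldsymbol{s}}L=2\boldsymbol{P(\lambda)}\boldsymbol{s}+2\boldsymbol{Q(\lambda)}=0$ yields the unique inner minimizer $\boldsymbol{s}^*(\boldsymbol{\lambda})=-\boldsymbol{P(\lambda)^{-1}Q(\lambda)}$, which is exactly the formula asserted in the theorem. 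Substituting back produces the dual function as a quadratic form in $\boldsymbol{Q(\lambda)}$ through $\boldsymbol{P(\lambda)^{-1}}$ plus $r(\boldsymbol{\lambda})$, and its concavity in $\boldsymbol{\lambda}$ is automatic since the dual is a pointwise infimum of functions affine in $\boldsymbol{\lambda}$.

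The remaining ingredient is strong duality, which needs care because $1-1/\sin^2(P_{\text{LoS}}^{-1}(\varepsilon))<0$ makes each $\boldsymbol{P_i}$ indefinite, so a naive look at the problem suggests a non-convex QCQP. I would handle this by noting that, in the physically meaningful regime $h_j>0$, each constraint rewrites as $\sqrt{(x_{u,j}-x_i)^2+(y_{u,j}-y_i)^2}\le h_j\cot(P_{\text{LoS}}^{-1}(\varepsilon))$, a second-order cone constraint; combined with the strictly convex quadratic objective, (\ref{update}) is then a convex program. Slater's condition is met by any altitude that places every cluster member strictly inside the coverage cone, so strong duality holds and the primal optimum can be recovered from the dual optimizer $\boldsymbol{\lambda^*}$ through $\boldsymbol{s^*}=-\boldsymbol{P(\lambda^*)^{-1}Q(\lambda^*)}$.

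I expect the main obstacle to be aligning the coefficient conventions so that the dual objective appears in precisely the form $\tfrac{1}{2}\boldsymbol{Q(\lambda)^T P(\lambda)^{-1} Q(\lambda)}+r(\boldsymbol{\lambda})$ stated in the theorem, since the natural derivation above produces the dual $-\boldsymbol{Q(\lambda)^T P(\lambda)^{-1} Q(\lambda)}+r(\boldsymbol{\lambda})$. I would reconcile this by choosing the factor-of-two scaling in the linear term of the canonical form consistently with the authors' definitions of $\boldsymbol{Q_o}$ and $\boldsymbol{Q_i}$ (and absorbing any sign convention into those vectors), so that the stated formula and the stated minimizer are simultaneously satisfied.
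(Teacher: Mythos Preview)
Your approach is essentially the paper's: cast (\ref{update}) as a QCQP in $\boldsymbol{s}=(x_{u,j},y_{u,j},h_j)^T$, form the Lagrangian, minimize over $\boldsymbol{s}$ to get $\boldsymbol{s}^*=-\boldsymbol{P(\lambda)^{-1}Q(\lambda)}$, and then maximize the resulting dual. The only differences are cosmetic scaling conventions---the paper uses the canonical form $\tfrac{1}{2}\boldsymbol{s}^T\boldsymbol{P}\boldsymbol{s}+\boldsymbol{Q}^T\boldsymbol{s}+r$, which makes their $\boldsymbol{P_o}=2|\mathcal{C}_j|\boldsymbol{I_3}$, $\boldsymbol{P_i}=\mathrm{diag}(2,2,\omega)$, and doubles your $\boldsymbol{Q}$'s---so your definitions are equivalent up to those factors.

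Two remarks. First, you are right to worry about the indefinite $\boldsymbol{P_i}$: the paper simply asserts that ``$\boldsymbol{P_o}$ and $\boldsymbol{P_i}$ are positive semidefinite'' and invokes convex QCQP duality directly, which is not literally correct since $\omega=1-1/\sin^2(P_{\text{LoS}}^{-1}(\varepsilon))<0$. Your SOCP rewriting on the half-space $h_j>0$ is a cleaner justification of strong duality than what the paper offers. Second, your instinct about the sign in the dual is also correct: substituting $\boldsymbol{s}^*$ into $\tfrac{1}{2}\boldsymbol{s}^T\boldsymbol{P(\lambda)}\boldsymbol{s}+\boldsymbol{Q(\lambda)}^T\boldsymbol{s}+r(\boldsymbol{\lambda})$ gives $-\tfrac{1}{2}\boldsymbol{Q(\lambda)}^T\boldsymbol{P(\lambda)}^{-1}\boldsymbol{Q(\lambda)}+r(\boldsymbol{\lambda})$, so the plus sign in the stated dual objective is a typo you should not try to reproduce. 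One small gap on your side: ``$\boldsymbol{P_o}\succ 0$'' alone does not guarantee $\boldsymbol{P(\lambda)}$ is invertible for all $\boldsymbol{\lambda}\ge 0$, since the third diagonal entry can vanish when $\sum_i\lambda_i$ is large; you need to restrict attention to the region where $\boldsymbol{P(\lambda)}\succ 0$ (which contains the dual optimum under your SOCP/Slater argument).
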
    
\begin{proof}
As we can see from (\ref{update}), the optimization problem is a quadratically constrained quadratic program (QCQP) whose general form is given by: 
\begin{align}\label{QCQP}
&\mathop {\min }\limits_{\boldsymbol{s}} \,{\rm{ }}\frac{1}{2}\boldsymbol{{s^T}{P_o}s} + \boldsymbol{Q_o^Ts} + {r_o},\\
&\text{s.t.}\,\,\frac{1}{2}\boldsymbol{{s^T}{P_i}s} + \boldsymbol{Q_i^Ts} + {r_i}\le 0,\,\,\,i = 1,...,|{\mathcal{C}_j}|.
\end{align}
Given (\ref{update}) and (\ref{consupdate}), we have:\vspace{0.2cm}

$\boldsymbol{{P_o}} = \left[ {\begin{array}{*{20}{c}}
	{2|{\mathcal{C}_j}|}&0&0\\
	0&{2|{\mathcal{C}_j}|}&0\\
	0&0&{2|{\mathcal{C}_j}|}
	\end{array}} \right]$, $\boldsymbol{{P_i}} = \left[ {\begin{array}{*{20}{c}}
	2&0&0\\
	0&2&0\\
	0&0&\omega 
	\end{array}} \right]$,\vspace{0.1cm}\\  $\omega  = 1 - \frac{1}{{{{\sin }^2}\left( {P_{\text{LoS}}^{ - 1}(\varepsilon )} \right)}}$, $\boldsymbol{Q_o}={\left[ {\begin{array}{*{20}{c}}
		{ - 2\sum\limits_{i = 1}^{|{\mathcal{C}_j}|} {{x_i}} }&{ - 2\sum\limits_{i = 1}^{|{\mathcal{C}_j}|} {{y_i}} }&0
		\end{array}} \right]^T}$,\vspace{0.2cm}\\ $\boldsymbol{{Q_i}} = {\left[ {\begin{array}{*{20}{c}}
		{ - 2{x_i}}&{ - 2{y_i}}&0
		\end{array}} \right]^T}$, ${r_o} = \sum\limits_{i = 1}^{|{\mathcal{C}_j}|} {x_i^2}  + \sum\limits_{i = 1}^{|{\mathcal{C}_j}|} {y_i^2}$, and \vspace{0.1cm}\\ ${r_i} = x_i^2 + y_i^2$. Note that, $\boldsymbol{P_o}$ and $\boldsymbol{P_i}$ are positive semidefinite matrices, and, hence, the QCQP problem in (\ref{QCQP}) is convex. Now, we write the Lagrange dual function as:\vspace{-0.1cm}
\begin{align}
f(\lambda ) = \mathop \text{inf}\limits_{\boldsymbol s} \biggl[&\frac{1}{2}\boldsymbol{{s^T}{P_o}s} + \boldsymbol{Q_o^Ts} + {r_o}\nonumber  \\
&+ \sum\limits_i {{\lambda _i}\left( {\frac{1}{2}\boldsymbol{{s^T}{P_i}s} + \boldsymbol{Q_i^Ts} + {r_i}} \right)}\biggr]\nonumber \\
&= \mathop \text{inf}\limits_{\boldsymbol s} \left[ {\frac{1}{2}\boldsymbol{{s^T}P(\lambda )s} + \boldsymbol{Q{{(\lambda )}^T}s} + r(\boldsymbol{\lambda} )} \right].\nonumber
\end{align}  
Clearly, by taking the gradient of the function inside the infimum with respect to $s$, we find $\boldsymbol{{s^*} =  - P{(\lambda )^{ - 1}}Q(\lambda )}$. As a result, using $\boldsymbol{{s^*}}$, $f(\boldsymbol{\lambda} ) = \frac{1}{2}\boldsymbol{Q{(\lambda )^T}P{(\lambda )^{ - 1}}Q(\lambda )} + {r(\boldsymbol{\lambda} )}$. Finally, the dual of problem (\ref{QCQP}) or (\ref{update}) will be:\vspace{-0.1cm}
\begin{align}
\text{max}\,\, f(\boldsymbol{\lambda}), \,\,\textnormal {s.t.} \,\,\,\boldsymbol{\lambda}  \ge 0,
\end{align}
which proves Theorem 1.
\end{proof}

The assignment and update steps are applied iteratively until there is no change in the  location update step. Clearly, at each iteration, the total transmit power is reduced and the objective function is monotonically decreasing. Hence, the solution converges after several iterations. 

In summary, for given locations of the ground IoT devices, we determined the optimal locations of the UAVs (cluster centers) for which the transmit power of the IoT devices used for reliable uplink communications is minimized. In a mobile IoT network, the UAVs must update their locations and follow the cluster centers as they evolve due to time-varying dynamics.  
Next, we investigate how to optimally move the UAVs to the center of the clusters.    

\section{Mobility of UAVs: Optimal Transport Theory}
Here, we find the optimal trajectory of the UAVs to guarantee the reliable uplink transmissions of mobile IoT devices. To move along the optimal trajectories, the UAVs must spend a minimum total energy on their mobility so as to remain operational for a longer time. In the considered mobile ground IoT network, the location of the devices and their availability might change, and hence, the clusters will change. Consequently, the UAVs must frequently update their locations accordingly. Now,  given the location of the cluster centers obtained in Section III, and initial locations of the UAVs, we determine which UAV should fly to which cluster center such that the total energy consumed for their mobility is minimized. 
In other words, given $\mathcal{I}$ and $\mathcal{J}$, the initial and new sets of UAVs' locations, one needs to find the optimal mapping between these two sets in a way that the energy used for transportations (between two sets) is minimized.

This problem can be modeled using discrete \emph{optimal transport theory} \cite{villani2003}. In its general form, optimal transport theory deals with finding an optimal transportation plan between two sets of points that leads to a minimum transportation cost \cite{villani2003}. These sets can be either discrete or continuous, with arbitrary distributions, and can have a general transportation cost function. The optimal transport theory was originated from the following Monge problem \cite{villani2003}. Given piles of sands and holes with the same volume, find the best move (transport map) to completely fill up the holes with the minimum total transportation cost. In general, this problem does not necessarily have a solution  as each point must be mapped to only one location. However, Kantorovich relaxed this problem by using transport plans instead of maps, in which one point can go to multiple points \cite{villani2003}.  

 In our model, the UAVs need to move from initial locations to the new destinations. The transportation cost for each move is the energy used by each UAV for the mobility. We model this problem based on the discrete Monge-Kantorovich problem as follows \cite{xia}: \vspace{-0.10cm}
\begin{align} \label{transport1}
&\min\limits_{Z_{kl}} \sum\limits_{l \in \mathcal{J}} {\sum\limits_{k \in \mathcal{I}} {{E_{kl}}{Z_{kl}}} }, \\
\text{s.t.}\,&\sum\limits_{l \in \mathcal{J}} {{Z_{kl}}}  = {m_k},\\
&\sum\limits_{k \in \mathcal{I}} {{Z_{kl}}}  = {m_l},\\
&{Z_{kl}} \in \{ 0,1\},
\end{align}
where $\mathcal{I}$ and $\mathcal{J}$, are the initial and new sets of UAVs' locations. $\boldsymbol{Z}$ is the $\mathcal{|J|\times|I|}$ transportation plan matrix with each element $Z_{kl}$ being 1 if UAV $k$ is assigned to location $l$, and 0 otherwise. $E_{kl}$ is the energy used for moving a UAV from its initial location with index $k \in \mathcal{I}$ to a new location with index $l \in \mathcal{J}$. $m_l$ and $m_k$ are the number of points (UAVs) at the locations with indices $l$ and $k$. 
The energy consumption of a UAV moving with a constant speed as a function of distance is given by \cite{di}:\vspace{-0.3cm}
\begin{equation} \label{energy}
E(D,v) = \int\limits_{t = 0}^{t = D/v} {p(v)dt}  = \frac{{p(v)}}{v}D,  
\end{equation}
where $D$ is the travel distance of the UAV, $v$ is the constant speed, $t$ is the travel time, and $p(v)$ is the power consumption as a function of speed. As we can see from (\ref{energy}), energy consumption for mobility is linearly proportional to the travel distance. Using the Kantorovich-duality, the discrete optimal transport problem in (\ref{transport1}) is equivalent to:\vspace{-0.1cm}
\begin{align}\label{dual}
&\max\limits_{\varphi,\xi} \left[ {\sum\limits_{k \in \mathcal{I}} {\xi (k)}  - \sum\limits_{l \in \mathcal{J}} {\varphi (l)} } \right],\\
& \text{s.t.}\,\, \varphi (l)-\xi (k)  \le {E_{kl}},
\end{align}
where $\xi :\mathcal{I} \to \mathds{R}$, and $\varphi :\mathcal{J} \to \mathds{R}$ are unknown functions of the maximization problem. The dual problem in (\ref{dual}) is used to solve the primal problem in (\ref{transport1}) by applying the complementary slackness theorem \cite{villani2003}. In this case, the optimal solution including the optimal transport plan between $\mathcal{I}$ and $\mathcal{J}$ is achieved when $\varphi (l)-\xi (k) = {E_{kl}}$ \cite{villani2003}. Here, to find the optimal mapping between initial set of locations and the destination set, we use the revised simplex method \cite{ford}. The result will be the transportation plan ($\boldsymbol{Z}$) that optimally assigns the UAVs to the destinations. Hence, the location of the UAVs are updated according to the new destinations. Subsequently, having the destinations of each UAV at different time instances, we can find the optimal trajectory of the UAVs. As a result, given the optimal paths, the UAVs are able to serve the mobile IoT devices in an energy-efficient way.

 

\section{Simulation Results and Analysis}
In our simulations, the IoT devices are deployed within a geographical area of size $1.2\,\text{km}\times 1.2 \,\text{km}$. We consider UAV-based communications in an urban environment with $\psi=11.95$, and $\beta=0.14$ at 2\,GHz carrier frequency \cite{HouraniModeling}. Moreover, we use the energy consumption model for UAVs' mobility as $E(D,v) = D\left( {0.95{v^2} - 20.4v + 130} \right)$ \cite{di}. Furthermore, in a time-varying network, to capture the mobility and availability of the ground IoT devices,   
we generate the new devices' locations by adding zero mean Gaussian random variables to the initial devices' locations. Table I lists the simulation parameters. Note that,
all statistical results are averaged over a large number of independent runs. 
\begin{table}[!t]
	\normalsize
	\begin{center}
		\caption{\small Simulation parameters.}
		\label{TableP}
		\resizebox{7.2cm}{!}{
			\begin{tabular}{|c|c|c|}
				\hline
				\textbf{Parameter} & \textbf{Description} & \textbf{Value} \\ \hline \hline
				$f_c$	&     Carrier frequency     &      2\,GHz     \\ \hline 
				$v$	&     UAV speed     &      10\,m/s    \\ \hline
				$\delta$	&    Bit error rate requirement      &      $10^{-8}$     \\ \hline
				$\epsilon$	&     $P_\text{LoS}$ requirement      &        0.95   \\ \hline
				$N_o$	&     Noise power spectral density    &        -170\,dBm/Hz   \\ \hline
				$R_b$	&     Transmission data rate      &       200\,Kbps \\ \hline
				$B$	&     Transmission bandwidth per device   &    200\,KHz \\ \hline
				$\eta$	&     Additional path loss to free space      &   5\,dB \\ \hline
				
			\end{tabular}}
			
		\end{center}\vspace{-0.3cm}
	\end{table}

Figure \ref{cluster} shows a snapshot of the optimal devices' clustering as well as the optimal UAVs' locations resulting from our proposed approach. In this example, 5 UAVs are used to support 100 IoT devices. We assume that each UAV has a limited number of resource blocks which can be allocated to at most 30 devices. Therefore, we have 5 clusters with a maximum size of 30 devices and 5 cluster centers corresponding to the locations of the UAVs.  As we can see from Figure \ref{cluster}, the location of IoT devices significantly impacts the number of devices per cluster and also the optimal locations of the UAVs. In this figure, the minimum and maximum cluster sizes are 3 and 27.

\begin{figure}[!t]
	\begin{center}
		\vspace{-0.2cm}
		\includegraphics[width=7.6cm]{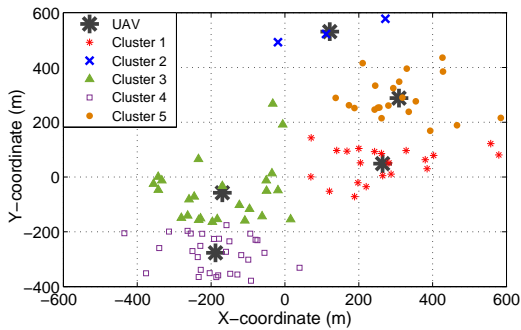}
		\vspace{-0.4cm}
		\caption{ \small Optimal clusters and UAVs' locations in one snapshot.\vspace{-0.7cm}}
		\label{cluster}
	\end{center}
\end{figure}

%
%

Figure \ref{Pt_voronoi} shows the total transmit power of devices versus the number of UAVs averaged over multiple simulation runs. In this figure, the performance of the proposed approach is compared with the fixed Voronoi case which is known to be a typical deployment method for static base stations. Note that, for a fair comparison, we assume that the total number of resource blocks is fixed ($L$), and hence, the number of resources per UAV is $\left\lceil {\frac{L}{K}} \right\rceil$. In other words, the maximum size of each cluster will decrease as the number of UAVs increases.   In the Voronoi method, assuming a uniform distribution of devices, we fix the location of UAVs at an altitude of 500\,m, and then, we assign each device to the closest UAV. However, in the proposed clustering algorithm, we find the optimal clusters and deploy the UAVs at the center of the clusters. As shown in Figure \ref{Pt_voronoi}, the proposed method outperforms the classical Voronoi scheme as the UAVs can be placed closer to the devices. As expected,  increasing the number of UAVs reduces the total transmit power of IoT devices. For instance, when the number of UAVs increases from 4 to 8, the total transmit power decreases from 77\,mW to 38\,mW for the proposed method, and from 115\,mW to 95\,mW for the Voronoi case. Figure \ref{Pt_voronoi} shows that our approach results in about 56\% reduction in the transmit power of the IoT devices.

\begin{figure}[!t]
	\begin{center}
		\vspace{-0.2cm}
		\includegraphics[width=6.8cm]{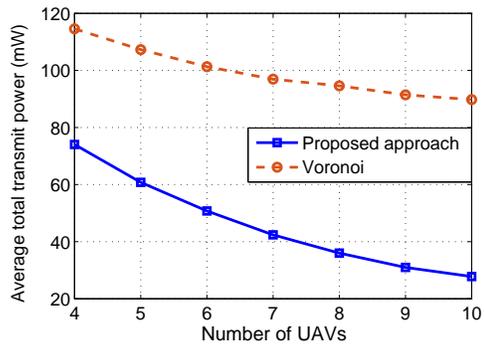}
		\vspace{-0.2cm}
		\caption{ \small Average of total transmit power vs. number of UAVs.\vspace{-0.4cm}}
		\label{Pt_voronoi}
	\end{center}
\end{figure}

Figure \ref{Trajectory} shows the trajectory of one of the UAVs in a mobile IoT scenario derived from optimal transport theory. 
Here, we consider 8 UAVs, and 400 devices whose locations are updated at each time by adding a Gaussian random variable with  $N(0,50\,\text{m})$ to the previous locations. 
 Clearly, since the locations of the devices may change over time, the optimal clusters must be updated accordingly. In Figure \ref{Trajectory}, the red dots correspond to the optimal destinations of the UAV at different times. In fact, as the clusters are changing over time, the UAV uses the proposed scheme to optimally move to one of the new cluster centers. 

\begin{figure}[!t]
	\begin{center}
		\vspace{-0.2cm}
		\includegraphics[width=7.2cm]{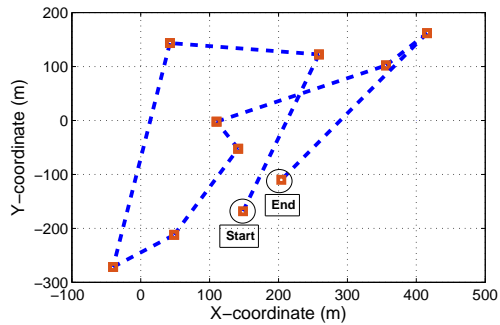}
		\vspace{-0.5cm}
		\caption{ \small Trajectory of a UAV in a mobile IoT network.\vspace{-0.8cm}}
		\label{Trajectory}
	\end{center}
\end{figure}

Figure \ref{EnergyConsumption} shows the energy consumed by each UAV during its mobility. In this case, we use 8 UAVs for supporting 400 ground IoT devices. We consider the network at 10 time instances during which the UAVs move at a speed of 10\,\text{m/s} while updating their locations. As shown in Figure \ref{EnergyConsumption}, for the given scenario, the total amount of energy that UAVs use for mobility is around 106\,\text{kJ}. Note that, this is the minimum total energy consumption that can be achieved via the optimal transport of the UAVs. As shown, different UAVs spend different amount of energy on the mobility. Depending on the optimal clustering of devices over time, different UAVs might have different travel distances to the cluster centers. For instance, UAV 1 consumes 1.8 times more energy than UAV 3. Hence, the number of UAVs may also change over time.
   
\begin{figure}[!t]
	\begin{center}
		\vspace{-0.5cm}
		\includegraphics[width=6.8cm]{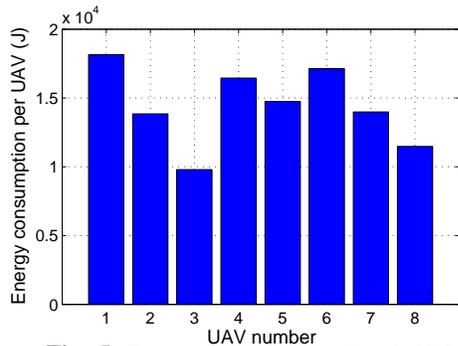}
		\vspace{-0.4cm}
		\caption{ \small Energy consumption of each UAV.\vspace{-0.15cm}}
		\label{EnergyConsumption}
	\end{center}
\end{figure}

Figure \ref{UAV_loss3} shows the energy consumption per UAV when the number of UAVs changes. Here, we assume that, initially the UAVs are optimally deployed for a given IoT system, however, after a while some of the UAVs ($q$ UAVs) will not be operational due to the lack of battery. Consequently, the number of UAVs decreases and the remaining UAVs must update their locations to maintain the power efficiency of the ground devices. 
In Figure \ref{UAV_loss3},  for the average case, we take the average of energy over all possible combinations of removing $q$ UAVs among the total UAVs. However, in the worst-case scenario, we remove the $q$ UAVs whose loss leads to the highest energy consumption for the remaining UAVs. 
 Clearly, as more UAVs become inoperational, the energy consumption of the functioning UAVs will increase. For example, when the number of lost UAVs increases from 2 to 4, the average energy consumption per UAV increases from 1520\,J to 2510\,J.\vspace{0.09cm} 

\begin{figure}[!t]
	\begin{center}
		\vspace{-0.5cm}
		\includegraphics[width=6.6cm]{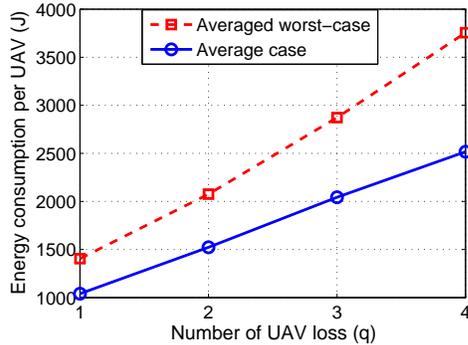}
		\vspace{-0.70cm}
		\caption{ \small Energy consumption vs. number of battery depleted UAVs.\vspace{-0.76cm}}
		\label{UAV_loss3}
	\end{center}
\end{figure}

\section{conclusions}\vspace{-0.007cm}
In this paper, we have proposed a novel framework for efficiently deploying and moving UAVs to collect data from ground IoT devices. In particular, we have determined the optimal clustering of IoT devices as well as the optimal deployment and mobility of the UAVs such that the total transmit power of IoT devices is minimized while meeting a required bit error rate. To perform clustering given the limited capacity of each UAV, we have adopted the constrained size clustering approach. 
Furthermore, we have obtained the optimal trajectories that are used by the UAVs to serve the mobile IoT devices with a minimum energy consumption.
 The results have shown that by carefully clustering the devices  and deploying the UAVs, the total transmit power of devices significantly decreases compared to the classical Voronoi-based deployment. Moreover, we have shown that, by intelligently moving the UAVs, they can remain operational for a longer time while serving the ground devices. \vspace{-0.01cm}   
\def\baselinestretch{1.00}
\bibliographystyle{IEEEtran}
\vspace{-0.01cm}
\bibliography{references}
\vspace{-0.5cm}
\end{document}